\newtheorem{corollary}{Corollary}
\newtheorem{proposition}{Proposition}
\newtheorem{lemma}{Lemma}
\begin{document}

\title{Distance verification for LDPC codes}
\date{\today}
\author{\IEEEauthorblockN{Ilya
Dumer\IEEEauthorrefmark{1}, Alexey A. Kovalev\IEEEauthorrefmark{2},
and Leonid P.~Pryadko\IEEEauthorrefmark{3}}
\IEEEauthorblockA{\IEEEauthorrefmark{1} ECE Department, University of California at Riverside, USA
(e-mail: dumer@ee.ucr.edu)}
\IEEEauthorblockA{\IEEEauthorrefmark{2} Department of Physics \& Astronomy,
University of Nebraska-Lincoln, USA (e-mail: alexey.kovalev@unl.edu)}
\IEEEauthorblockA{\IEEEauthorrefmark{3} Department of
Physics \& Astronomy, University of California at Riverside, USA
(e-mail: leonid@ucr.edu)}}
\maketitle
\vspace{-0.03in}

\begin{abstract}
The problem of \ finding code distance has been long studied for the generic
ensembles of linear codes and led to several algorithms that substantially
reduce exponential complexity of this task. However, no asymptotic
complexity bounds are known for distance verification in other ensembles of
linear codes. Our goal is to re-design the existing generic algorithms of
distance verification and derive their complexity for LDPC codes. We obtain
new complexity bounds with provable performance
expressed in terms of the erasure-correcting thresholds
 of long LDPC codes. These bounds exponentially reduce complexity
estimates known for linear codes.

\textbf{Index Terms} -- Distance verification, complexity bounds, LDPC codes,
erasure correction, covering sets

\end{abstract}\vspace{-0.03in}

\section{Introduction}

This paper addresses the problem of finding code distances of LDPC codes with
provable complexity estimates. Note that finding code distance $d$ of a
generic code is an NP-hard problem. This is valid for both the exact setting
\cite{Vardy-1997} \ and the evaluation problem \cite{Dumer-2003},
\cite{Cheng-2009},\ where we only verify if $d$ belongs to some interval
$[\delta,c\delta]$ given some constant $c\in(1,2).$ To this end, all
algorithms of distance verification discussed in this paper have exponential
complexity $2^{Fn}$ in blocklength $n$ \ and our goal is to reduce the
complexity exponent $F$.

Below we address generic algorithms of \textit{distance verification} - known
for linear codes - and re-design these algorithms for LDPC codes. The main
problem is that such algorithms heavily rely on the  properties of the
randomly chosen generator (or parity-check) matrices. These properties have
not been proved (or do not hold) for the smaller ensembles of codes, such as
cyclic codes, LDPC codes, and others. Therefore, we will use a different
technique and derive complexity estimates for LDPC codes using a single
parameter, which is the erasure-correcting threshold of a specific code
ensemble. \ We then define this threshold via the average weight spectra of
LDPC codes. This technique is different from the generic approach. In
particular, we calculate the average complexity of distance verification and
then discard a vanishing fraction of \ LDPC codes that have atypically high
complexity. \ Our main result is the new complexity bounds for distance
verification for ensembles  of LDPC codes or other ensembles with a given
erasure-correcting threshold. These algorithms perform with
an arbitrarily high level of accuracy.

Here, however, we leave out some efficient algorithms that require more
specific estimates to perform distance verification with provable complexity.
\ In particular, we  do not address belief propagation (BP) algorithms, which
can  end at the stopping sets and therefore fail to furnish
distance verification with an arbitrarily high likelihood. Some other algorithms
also include impulse techniques \cite{Declercq-Fossorier-2008} that apply list decoding BP algorithms
to the randomly induced errors. Simulation results presented in  \cite{Declercq-Fossorier-2008} 
show that impulse techniques can also be effective in distance
verification albeit with a lesser fidelity.

\section{Background}

Let $C[n,k]$ be a linear binary code of length $n$ and dimension $k$. The
problem of verifying distance $d$ of a linear code (finding a minimum-weight
codeword) is related to the decoding problem: find an error of minimum weight
that gives the same syndrome as the received codeword. \ The number of
operations $N$ required for distance verification can usually be defined by
some positive exponent $F=\overline{\lim}$ ($\log_{2}N)/n$ as $n\rightarrow
\infty$. For example, for any code $C[n,k]$, inspection of all $2^{k}$
distinct codewords has (time) complexity exponent $F=R$, where $R=k/n$ is the
code rate. Given substantially large memory, one can instead consider the
syndrome table that stores the list of all $q^{r}$ syndromes and coset
leaders, where $r=n-k$. This setting gives (space) complexity $F=1-R$. \ 

To proceed with the more efficient algorithms, we also need to consider some
parameters of the shortened and punctured codes. Let $G$ and $H$ denote a
generator and parity check matrices of a code $C[n,k]$. \ Let $I$ be some
subset of $g\geq k$ positions and $J$ be the complementary subset of
$\eta=n-g,$ $\eta\leq r,$ positions. Consider the punctured code $\widehat
{C}_{I}=\{c_{I}:$ $c\in C\}$ generated by submatrix $G_{I}$ of size $k\times
g.$ The complementary shortened code $C_{J}=\{c_{J}:c_{I}=0\}$ has
parity-check matrix $H_{J}$ of size $r\times\eta.$ These matrices include at
most $k$ and $\eta$ linearly independent rows, respectively. Let $b\left(
G_{I}\right)  =k-\mathop\mathrm{rank}G_{I}$ and $b\left(  H_{J}\right)
=\eta-\mathop\mathrm{rank}H_{J}$ denote the co-ranks of these two matrices.
Throughout the paper, we use the following simple statement.

\begin{lemma}
\label{lm:weight}For any linear code $C[n,k],$ matrices $G_{I}$ and $H_{J}$
have equal co-ranks $b\left(  G_{I}\right)  =b\left(  H_{J}\right)  $ on the
complementary subsets $I$ and $J.$
\end{lemma}

\begin{proof}
Code $C_{J}$ has size $2^{\dim C_{J}}=2^{\eta-\mathrm{rank}H_{J}}=2^{b\left(
H_{J}\right)  }$ and contains all (shortened) code vectors $c$ with $c_{I}=0.$
On the other hand, \ for a given matrix $G_{I},$ there are $2^{b\left(
G_{I}\right)  }$ vectors $c$ with $c_{I}=0.$ Thus, $b\left(  G_{I}\right)
=b\left(  H_{J}\right)  .$
\end{proof}

Now consider the shortened codes $C_{J}$ of length $\mathcal{\eta}=\theta n$
taken over over different sets $J.$ Let codes $C_{J}$ have the average size
$N_{\theta}=2^{\dim C_{J}}.$ \ Then Markov's inequality gives another useful estimate.

\begin{corollary}
\label{cor:fraction}For any subset $J$ and any $t>0$, at most a fraction
$\frac{1}{t}$ of the shortened codes $C_{J}$ have size exceeding $tN_{\theta}$.
\end{corollary}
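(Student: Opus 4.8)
The plan is to read this statement as a direct instance of Markov's inequality, so the bulk of the work is simply to pin down the probability space over which the word "fraction" is being measured. I would fix the length $\eta=\theta n$ and regard the subset $J$ of $\eta$ positions as drawn uniformly at random from the finite collection of all such subsets; this is precisely the implicit distribution under which $N_{\theta}$ was introduced as the \emph{average} size. Define the random variable $X=|C_{J}|=2^{\dim C_{J}}$, viewed as a function of this random choice of $J$. By construction $X$ is non-negative, and its expectation equals the stated average, $\mathbb{E}[X]=N_{\theta}$.

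With this setup in place I would apply Markov's inequality at the threshold $a=tN_{\theta}$. Since $X\geq 0$ and $t>0$, Markov gives
\[
\Pr\!\left[X\geq tN_{\theta}\right]\leq \frac{\mathbb{E}[X]}{tN_{\theta}}=\frac{N_{\theta}}{tN_{\theta}}=\frac{1}{t},
\]
and a fortiori $\Pr[X>tN_{\theta}]\leq 1/t$, using that the event of strictly exceeding $tN_{\theta}$ is contained in the event of being at least $tN_{\theta}$. Because $J$ is uniform over a finite index set, this probability is identically the fraction of subsets $J$ whose shortened code $C_{J}$ has size exceeding $tN_{\theta}$, which is exactly the claim.

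There is no genuine obstacle in the computation itself; the entire content is the Markov bound, which is one line. The only point requiring care — and the one place a reader could stumble if it is left implicit — is the specification of the ensemble: one must state clearly that the averaging defining $N_{\theta}$ and the counting defining the "fraction" are taken over the same uniform distribution on subsets $J$ of the prescribed size. Once that identification is made explicit, the corollary is immediate.
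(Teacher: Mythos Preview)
Your proposal is correct and matches the paper's approach exactly: the paper offers no explicit proof beyond the preceding sentence ``Then Markov's inequality gives another useful estimate,'' and your argument is precisely that one-line application of Markov. Your care in identifying the probability space (uniform over subsets $J$ of size $\theta n$, which is indeed how $N_{\theta}$ is defined in the text immediately before the corollary) is appropriate, and the awkward phrase ``For any subset $J$'' in the statement should be read as loose language for ``ranging over subsets $J$'' rather than as fixing a particular $J$.
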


We will now consider two ensembles of regular LDPC codes.  Ensemble
$\mathbb{A}(\ell,m)$ is defined by the equiprobable $r\times n$ matrices $H$
that have all columns of weight $\ell$ and all rows of weight $m=\ell n/r.$
Below we take $m\geq\ell\geq3.$ This ensemble also includes a smaller LDPC
ensemble $\mathbb{B}(\ell,m)$ originally proposed by Gallager \cite{Gallager}.
For each code in $\mathbb{B}(\ell,m)$, its parity-check matrix $H$ consists of
$\ell$ horizontal blocks $H_{1},...,H_{\ell}$ of size $\frac{r}{\ell}\times
n.$ The first block $H_{1}$ consists of $m$ consecutive unit matrices of size
$\frac{r}{\ell}\times\frac{r}{\ell}$. Any other block $H_{i}$ is obtained by
some random permutation $\pi_{i}(n)$ of $n$ columns of $H_{1}.\ $ Ensembles
$\mathbb{A}(\ell,m)$ and $\mathbb{B}(\ell,m)$ have similar spectra and achieve
the best asymptotic distance for a given code rate $1-\ell/m$ \ among various
LDPC ensembles studied to date \cite{Litsyn-2002}.

Note that LDPC codes are defined by non-generic, sparse parity check matrices
$H_{J}$. Below we will relate the co-ranks $b_{J}=b\left(  H_{J}\right)  $ of
these matrices $H_{J}$ to the erasure-correcting thresholds of LDPC codes. In
doing so, we extensively use the average weight spectra derived for the
ensemble $\mathbb{B}(\ell,m)$ in \cite{Gallager} and for ensemble
$\mathbb{A}(\ell,m)$ in \cite{Litsyn-2002}. \ We note, however, \ that this
analysis can readily be extended to other ensembles with the known average
weight spectra.

Let $\alpha=\ell/m=1-R.$ For any parameter $\beta\in\lbrack0,1],$ consider the
equation
\begin{equation}
\frac{(1+t)^{m-1}+(1-t)^{m-1}}{(1+t)^{m}+(1-t)^{m}}=1-\beta\label{t1}%
\end{equation}
that has a single positive root $t$. Also, let $h(\beta)$ be the binary
entropy function. \ Below we extensively use the parameter
\begin{equation}
q\left(  \alpha,\beta\right)  =\alpha\log_{2}\frac{(1+t)^{m}+(1-t)^{m}%
}{2t^{\beta m}}-\alpha mh(\beta), \label{p1}%
\end{equation}
and also take $q\left(  \alpha,\beta\right)  =-\infty$  if
$m$ is odd and $\beta\geq1-\frac{1}{m}.$ Then Theorem 4 of \cite{Litsyn-2002}
shows that a given codeword of weight $\beta n$ belongs to some code in
ensemble $\mathbb{A}(\ell,m)$ with probability $P\left(  \alpha,\beta\right)  $ such
that
\begin{equation}
\underset{n\rightarrow\infty}{\lim}\textstyle\frac{1}{n}\log_{2}P\left(
\alpha,\beta\right)  =q\left(  \alpha,\beta\right)  \label{p2}%
\end{equation}

\begin{lemma}
\label{lm:ldpc}For any given subset $J$ of size $\theta n$, where $\theta
\leq1,$ codes $C_{J}(\ell,m)$ of the shortened ensemble $\mathbb{A}_{J}%
(\ell,m)$ have the average number $N_{\theta}$ of nonzero codewords such that
\begin{equation}
\underset{n\rightarrow\infty}{\lim}\textstyle\frac{1}{n}\log_{2}N_{\theta
}=f(\theta) \label{punct}%
\end{equation}
\begin{equation}
f(\theta):=\max_{0<\beta\leq1}\left\{  q\left(  \alpha,\beta\theta\right)
+{\theta h(\beta)}\right\}  \label{f}%
\end{equation}

\end{lemma}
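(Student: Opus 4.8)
The plan is to compute the average $N_\theta$ directly by linearity of expectation and then extract its exponential growth rate. The first step is to observe that, by the definition of the shortened code $C_J=\{c_J:c_I=0\}$, the nonzero codewords of $C_J$ are in one-to-one correspondence with the nonzero codewords of the parent code $C$ whose support lies entirely within $J$. Averaging over the ensemble $\mathbb{A}(\ell,m)$ and using linearity of expectation, I would write
\begin{equation}
N_\theta=\sum_{w=1}^{\theta n}\binom{\theta n}{w}\,P\left(\alpha,w/n\right),\label{eq:sum}
\end{equation}
since by the symmetry of the ensemble the probability that any fixed weight-$w$ vector supported in $J$ is a codeword equals $P(\alpha,w/n)$, the quantity furnished by Theorem 4 of \cite{Litsyn-2002}, and there are exactly $\binom{\theta n}{w}$ such vectors. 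Note that \eqref{eq:sum} counts precisely the average number of nonzero codewords, i.e.\ the average of $2^{\dim C_J}-1$.

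Next I would pass to exponential rates. Writing $w=\beta\theta n$ with $\beta\in(0,1]$, the binomial coefficient satisfies $\frac1n\log_2\binom{\theta n}{\beta\theta n}\to\theta h(\beta)$ by the standard entropy estimate, while \eqref{p2} gives $\frac1n\log_2 P(\alpha,\beta\theta)\to q(\alpha,\beta\theta)$. Hence each summand in \eqref{eq:sum} has exponential rate $q(\alpha,\beta\theta)+\theta h(\beta)$, which is exactly the expression maximized in \eqref{f}.

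Finally I would convert the sum into its maximal term. Since \eqref{eq:sum} is a sum of at most $\theta n$ nonnegative terms, its value lies between the largest term and $\theta n$ times the largest term; taking $\frac1n\log_2(\cdot)$ and letting $n\to\infty$ collapses the polynomial prefactor and yields $f(\theta)=\max_{0<\beta\le1}\{q(\alpha,\beta\theta)+\theta h(\beta)\}$. I expect the main obstacle to be the uniformity of the convergence in \eqref{p2}: to conclude that the rate of the whole sum equals the pointwise maximum, I must ensure the estimate $\frac1n\log_2 P(\alpha,\beta\theta)\to q(\alpha,\beta\theta)$ holds with $o(n)$ error uniformly in $\beta$, and I must treat separately the degenerate range where $q(\alpha,\beta\theta)=-\infty$ (odd $m$ with $\beta\theta\ge1-\frac1m$), where the corresponding terms vanish and are simply excluded from the maximization. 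With these caveats handled, the upper and lower bounds on \eqref{eq:sum} match in the limit and establish \eqref{punct}.
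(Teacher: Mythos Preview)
Your proposal is correct and follows essentially the same approach as the paper: stratify the nonzero codewords supported on $J$ by weight $w=\beta\theta n$, count them as $\binom{\theta n}{\beta\theta n}P(\alpha,\beta\theta)$, and pass to exponential rates to obtain the maximum in \eqref{f}. The paper's proof is terser and does not explicitly discuss the uniformity of \eqref{p2} or the degenerate $q=-\infty$ range, but the underlying argument is identical.
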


\begin{proof}
For any set $J$ of size $\theta n,$ consider codewords $c$ of weight $\beta{\theta
n}$ that have support on $J.$ For any $\beta\in(0,1]$, codes in
$\mathbb{A}_{J}(\ell,m)$ contain the average number
$N_{\theta}\left(  \beta\right)  =P\left(  \alpha,\beta\theta\right)
{\binom{\theta n}{\beta\theta n}}$
of such codewords $c$.  Then
\begin{equation}
\max_{0<\beta\leq1}\frac{\log
_{2}N_{\theta}\left(  \beta\right)  }{n}\,{\sim}\,\max_{0<\beta\leq1}\left\{
q\left(  \alpha,\beta\theta\right)  +{\theta h(\beta)}\right\}  \smallskip
\smallskip\label{p3}%
\end{equation}
which gives asymptotic equalities (\ref{punct}) and (\ref{f}).
\end{proof}

\section{Distance verification for LDPC codes\label{sec:LDPC}}\vspace{-0.05in}

\subsection{Two main parameters for complexity estimates.
\ \label{sec:SW-LDPC}}

Two essential differences separate LDPC ensembles from random codes in regards
to complexity estimates. These differences are closely related to two
parameters, $\delta_{\ast}$ and $\theta_{\ast},$ which are the roots of the
equations%
\begin{equation}%
\begin{tabular}
[c]{l}%
$\delta_{\ast}:h(\delta_{\ast})+q(\alpha, \delta_{\ast})=0\smallskip
\smallskip$\\
$\theta_{\ast}:f(\theta_{\ast})=0.\,$%
\end{tabular}
\ \ \ \ \label{dist1}%
\end{equation}
Note that $\delta_{\ast}$ is the average relative code distance \ in the
ensemble $\mathbb{A}(\ell,m).$ \ Indeed, for ${\theta=1,}$ equality (\ref{p3})
shows that the average number of codewords $N_{\theta}(\beta)$ of length $n$
and weight $\beta n$ has asymptotic order
\begin{equation}
\textstyle\frac{1}{n}\log_{2}N(\beta)\,{\sim}\,h(\beta)+q\left(  \alpha
,\beta\right)  \label{spec}%
\end{equation}
For any code rate $R=1-\ell/m$, $\delta_{\ast}$ \ falls below the GV distance
$h^{-1}(1-R)$ of random codes (see \cite{Gallager} and \cite{Litsyn-2002} for
more details). \ For example, $\delta_{\ast}\sim0.02$ for the $\mathbb{A}%
(3,6)$\ ensemble of rate $R=1/2$, whereas $h^{-1}(0.5)\sim0.11.$ The smaller
distances $\delta_{\ast}$ will reduce the complexity of distance verification.

Parameter $\theta_{\ast}$ also plays a significant role in distance
verification. Namely, consider a code ensemble $\mathbb{C}$ of \ growing
length $n\rightarrow\infty.$ Let $N_{\theta}$ be the number of nonzero
codewords in the shortened codes $C_{J}$ averaged over all codes
$C\in\mathbb{C}$ and all subsets $J$ of size $\theta n.$ Then we use the
following statement. \ 

\begin{lemma}
\label{lm:erasure}Let the ensemble $\mathbb{C}$ have a vanishing average
number $N_{\theta}\rightarrow0$ of nonzero codewords in the shortened codes
$C_{J}$ of length $\theta n.$ Then most codes $C\in\mathbb{C}$ \ correct most
erasure subsets $J,$ with the exception of a vanishing fraction of \ codes $C$
and subsets $J.$
\end{lemma}

\begin{proof}
A code $C\in\mathbb{C}$ fails to correct some erasure set $J$ \ of weight
$\theta n$ iff code $C_{J}$ has $N_{J}(C)\geq1$ nonzero codewords. \ Let
$M_{\theta}$ be the average fraction of such codes $C_{J}$ taken over all
codes $C$ and all subsets $J.$ Note that $M_{\theta}\leq N_{\theta}$.  Per Markov's
inequality,  no more than a fraction $\sqrt{M_{\theta}}$ of codes $C$ may leave a
fraction $\sqrt{M_{\theta}}$ of sets $J$ \ uncorrected. \smallskip
\end{proof}

More generally, we say that an ensemble of codes $\mathbb{C}$ has the
erasure-correcting threshold $\theta_{\ast}$ if $N_{\theta}\rightarrow0$ for
any $\theta<\theta_{\ast}$ and $N_{\theta}\geq1$ for any $\theta>\theta_{\ast
}$ on the sets $J$ of size $\theta n$. \ Here ensembles $\mathbb{A}(\ell,m)$
and $\mathbb{B}(\ell,m)$ satisfy Lemma \ref{lm:erasure} for any $\theta
<\theta_{\ast}$ of (\ref{dist1}). Thus, $\theta_{\ast}$\ serves as a
lower bound on the erasure-correcting capacity of LDPC codes under ML
decoding. Alternatively, one can use other thresholds, such as the threshold
for message-passing algorithms. Note also that ensembles $\mathbb{A}(\ell,m)$
and $\mathbb{B}(\ell,m)$ are permutation-invariant and therefore yield
the same fraction of uncorrected codes $C$ for each erasure subset $J.$ Then
for any $\varepsilon>0$, the bound $N_{J}(C)\leq2^{\varepsilon n}$ holds on
\textit{all} subsets $J$ (except for a fraction of $2^{-\varepsilon n}$ of codes
$C)$ as long as $N_{\theta}\leq1.$

For LDPC codes, $\theta_{\ast}<\alpha,$ where $\alpha=1-R$ is the
erasure-correcting threshold for random linear codes. For example,
$\theta_{\ast}=0.483$ for the ensemble $\mathbb{A}(3,6)$ of LDPC codes. See
also papers \cite{Luby-2001,Rich-Urb-2001,Rich-2001,Pishro-2004}, where
parameter $\theta_{\ast}$\ is discussed in a greater detail for both ML
decoding and message-passing decoder.

 The reduced erasure-correcting threshold
$\theta_{\ast}$ will increase complexity estimates for LDPC codes. \ In the
sequel, we will show that the first factor (the smaller distance $\delta_{\ast}$) outweighs the second factor (the smaller threshold $\theta_{\ast}$) and
reduces complexity of distance verification for LDPC codes.\vspace{-0.05in}

\subsection{Sliding window (SW) technique for LDPC codes \label{sec:sliding}}

This technique of \cite{Evseev-1983} decodes
generic linear codes $C[n,k,d]$ generated by the randomly chosen \ $(Rn\times
n)$ matrices $G$. Note that most such codes have full dimension
$k=Rn$ and meet the asymptotic GV bound $d/n\rightarrow h^{-1}(1-R)$. It is
 shown in \cite{Evseev-1983} that nearly full decoding (that has error probability similar to that of ML decoding) can be performed
for most codes $C[n,k,d]$ with complexity of order $2^{nR(1-R)}.$ Below we
modify this algorithm for other ensembles of codes, such as $\mathbb{A}%
(\ell,m)$ or $\mathbb{B}(\ell,m)$.

\begin{proposition}
\label{prop:SW-1}Consider any ensemble of codes $\mathbb{C}$ with an average
relative distance {$\delta_{\ast}$} and an erasure-correcting bound
$\theta_{\ast}.$ For most codes $C\in\mathbb{C}$, SW technique performs
distance verification with complexity of exponential order $2^{Fn}$ or less,
where
\begin{equation}
F={(1-\theta}_{\ast}{)h(\delta_{\ast})} \label{sw-ldpc1}%
\end{equation}

\end{proposition}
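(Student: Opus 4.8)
The plan is to realize the Evseev sliding-window algorithm as a covering-set procedure in which a large window plays the role of a generalized information set and the erasure-correcting threshold $\theta_{\ast}$ governs the complementary part that is reconstructed by decoding. Fix a parameter $\theta<\theta_{\ast}$ and split the $n$ coordinates into a block $J$ of size $\eta=\theta n$ and its complement $I$ of size $(1-\theta)n\geq k$. First I would combine Lemma \ref{lm:ldpc} with the definition of $\theta_{\ast}$ in (\ref{dist1}): since $\theta<\theta_{\ast}$ we have $N_{\theta}\rightarrow0$, so Lemma \ref{lm:erasure} guarantees that for all but a vanishing fraction of codes $C\in\mathbb{C}$ and of blocks $J$ the shortened code $C_{J}$ is trivial. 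By Lemma \ref{lm:weight} this forces $H_{J}$ to have full column rank $\eta$, hence the restriction map $c\mapsto c_{I}$ is injective on $C$ and the erased block $J$ is recovered from $c_{I}$ in polynomial time by Gaussian elimination on $H_{J}$.

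Next I would set up the enumeration and argue correctness. For a typical code the minimum-weight codeword $c^{\ast}$ has weight $\delta_{\ast}n$, since $\delta_{\ast}$ is the average relative distance of $\mathbb{C}$ (cf. (\ref{spec})). Over a uniformly chosen placement of $I$ the expected weight of the restriction $c^{\ast}_{I}$ equals $\delta_{\ast}(1-\theta)n$, so a constant fraction of placements satisfy $\mathrm{wt}(c^{\ast}_{I})\leq\delta_{\ast}(1-\theta)n$; sliding the window over $O(n)$ placements therefore captures $c^{\ast}$ with high probability and contributes only a lower-order factor. For each placement the algorithm enumerates every pattern $x$ on $I$ of weight at most $\delta_{\ast}(1-\theta)n$, erasure-decodes each $x$ to the unique $c\in C$ with $c_{I}=x$, and retains the lightest $c$. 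When the window captures $c^{\ast}$, the choice $x=c^{\ast}_{I}$ reconstructs $c^{\ast}$ itself; as no codeword is lighter, the reported weight is exactly $\delta_{\ast}n$ and the distance is verified.

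The running time is dominated by the enumeration over $I$: the number of patterns of weight at most $\delta_{\ast}(1-\theta)n$ on $(1-\theta)n$ positions is $\binom{(1-\theta)n}{\delta_{\ast}(1-\theta)n}2^{o(n)}=2^{(1-\theta)h(\delta_{\ast})n+o(n)}$, while decoding per pattern and the number of window placements add only polynomial factors. Letting $\theta\rightarrow\theta_{\ast}$ drives the exponent to $(1-\theta_{\ast})h(\delta_{\ast})$, which is (\ref{sw-ldpc1}). As a sanity check, for random codes $\delta_{\ast}=h^{-1}(1-R)$ and $\theta_{\ast}=1-R$, and the bound collapses to the classical $2^{R(1-R)n}$.

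I expect the main obstacle to be the erasure-correction step performed uniformly over the whole family of window placements: one must ensure that $C_{J}$ is trivial simultaneously for all blocks $J$ used while discarding only a vanishing fraction of codes. For the permutation-invariant ensembles $\mathbb{A}(\ell,m)$ and $\mathbb{B}(\ell,m)$ this is handled by the stronger bound $N_{J}(C)\leq2^{\varepsilon n}$ noted after Lemma \ref{lm:erasure}, which replaces unique decoding by a list of size $2^{\varepsilon n}$; taking the boundary $\theta\rightarrow\theta_{\ast}$ and $\varepsilon\rightarrow0$ then absorbs the extra work into the $o(n)$ term and leaves $F=(1-\theta_{\ast})h(\delta_{\ast})$ unchanged.
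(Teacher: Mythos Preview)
Your proof is correct and follows the paper's approach: cyclic sliding windows of size $s\approx(1-\theta_\ast)n$, enumeration of light patterns on the window, and erasure reconstruction on the complement. Two small points where the paper is cleaner: (i) it uses the discrete intermediate-value property of cyclic windows (the window weight changes by at most one per shift) to guarantee deterministically a window of \emph{exact} average weight $\lfloor\delta_\ast s\rfloor$, so your ``constant fraction with weight $\le\delta_\ast(1-\theta)n$'' claim is unnecessary and in fact does not follow from the average alone (only ``at least one'' does, which suffices); (ii) instead of first arguing unique decoding and then patching the simultaneity issue with a $2^{\varepsilon n}$-list, the paper simply absorbs the factor $N_\theta$ into the \emph{average} per-window complexity $n^3 N_\theta L$ and applies Markov's inequality over the ensemble, which handles all $n$ windows at once.
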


\begin{proof}
Consider a sliding window $I(i,s)$, which is the set of $s$ cyclically
consecutive positions for some $i=0,\ldots,n-1$. \ We choose $s=(1-\theta
_{\ast}+\varepsilon)n,$ \ where $\varepsilon>0$ is a parameter such that
$\varepsilon\rightarrow0$ as $n\rightarrow\infty.$ A window $I(i,s)$ can
change its Hamming weight only by one when it moves from position $i$ to
$i+1$; thus any codeword $c$ of weight $d=\delta_{\ast}n$ has at least one
window $I(i,s)$ with the average Hamming weight $v=\left\lfloor \delta_{\ast
}s\right\rfloor $. \ \ For each window $I,$ we inspect all $L={\binom{s}{v}}$
vectors $c_{I}$ of weight $v.$ Here%
\[
\textstyle\frac{1}{n}\log_{2}L{\,{\sim}\,(1-\theta}_{\ast}+\varepsilon
{)h(\delta_{\ast})}%
\]
We then encode each vector $c_{I}$ performing erasure correction
on the complementary sets $J=\overline{I}$ of size $(\theta_{\ast}%
-\varepsilon)n.$  Thus, a typical vector $c_{I}$ generates the average number
$N_{\theta}$ of nonzero codewords $c_{J}$. Given $L$ vectors $c_{I}$ and $n$ sets $I=I(i,s),$ we obtain the
average encoding complexity of $n^{3}N_{\theta}L$. Here we take the average over different codes $C\in\mathbb{C}$.
Thus, at most a vanishing fraction $n^{-1}$ of such codes \ have complexity
above $n^{4}N_{\theta}L$ \ for all $n$ subsets $I.$ This gives (\ref{sw-ldpc1}%
) as $\varepsilon\rightarrow0.$\smallskip\ 
\end{proof}\vspace{-0.05in}

\subsection{Matching Bipartition (MB) technique for LDPC
codes\label{sec:bipartition}}

Below we briefly discuss MB-technique of \cite{Dumer-1986, Dumer-1989}. It
works for any linear code and yields the lowest asymptotic complexity  for
very high code rates $R\rightarrow1.$

\begin{proposition}
\label{prop:MB-2}MB technique performs distance verification for a linear code
of distance {$\delta_{\ast}n$} with complexity of exponential order $2^{Fn}$,
where
\begin{equation}
F={h(\delta_{\ast})}/2 \label{mb-2}%
\end{equation}

\end{proposition}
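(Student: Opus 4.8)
The plan is to prove Proposition~\ref{prop:MB-2} by implementing a meet-in-the-middle (birthday-style) search over the generator matrix, which is the standard mechanism behind the Matching Bipartition technique. First I would split the $n$ coordinate positions into two halves $I_1$ and $I_2$ of size $n/2$ each. A minimum-weight codeword $c$ of weight $\delta_\ast n$ splits as $c=(c_{I_1},c_{I_2})$, and by a simple averaging (pigeonhole) argument there exists a bipartition under which each half carries weight at most $\delta_\ast n/2$; alternatively, cycling through the $\binom{\delta_\ast n}{\delta_\ast n/2}$ ways of distributing the weight is subexponentially cheap compared to the dominant term, so we may assume a balanced split of the nonzero coordinates.

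The key algorithmic step is to enumerate, on each half separately, all information vectors producing a partial codeword of Hamming weight $\delta_\ast n/2$ on that half. Writing the codeword as $c=uG$ for the message $u$, the contribution of each half is a partial syndrome (or partial encoding), and a minimum-weight codeword corresponds to a pair of half-vectors whose contributions match on a shared set of parity positions while summing to total weight $\delta_\ast n$. The number of candidate half-vectors to enumerate on each side has exponential order
\[
\textstyle\frac{1}{n}\log_2\binom{n/2}{\delta_\ast n/2}\,{\sim}\,\tfrac12 h(\delta_\ast),
\]
since $\binom{n/2}{\delta_\ast n/2}=2^{(n/2)h(\delta_\ast)(1+o(1))}$. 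We store the half-vectors from $I_1$ in a sorted table keyed by their matching signature, then stream the half-vectors from $I_2$ and look up collisions; each collision is a candidate codeword whose total weight we check. Both the table-building and the lookup pass cost $2^{h(\delta_\ast)n/2}$ up to polynomial factors, which yields the claimed exponent $F=h(\delta_\ast)/2$ in~(\ref{mb-2}).

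The step I expect to be the main obstacle is guaranteeing that the meet-in-the-middle matching actually \emph{recovers} a minimum-weight codeword rather than merely enumerating partial weights, i.e.\ ensuring the two half-lists are indexed by a common signature fine enough to detect a genuine codeword yet coarse enough that the table size stays within $2^{h(\delta_\ast)n/2}$. This requires care in choosing which parity positions define the matching key and in showing, using Lemma~\ref{lm:weight} and Corollary~\ref{cor:fraction} to control the co-ranks $b(H_J)$ on the relevant coordinate subsets, that the number of spurious collisions does not inflate the dominant exponent. Here, unlike the SW technique of Proposition~\ref{prop:SW-1}, the erasure threshold $\theta_\ast$ does not enter, so the argument is purely combinatorial in $\delta_\ast$; the delicacy is purely in verifying that the collision count and the memory both remain at the balanced exponent $h(\delta_\ast)/2$, which dominates for high rates $R\to1$ where $\delta_\ast$ is small.
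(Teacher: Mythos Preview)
Your overall plan---split the support in half, enumerate low-weight half-vectors on each side, and match---is the paper's plan too, so the high-level approach is right.  However, the execution you sketch is more convoluted than necessary, and the ``main obstacle'' you anticipate is not actually present.

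The paper does \emph{not} work through the generator matrix or enumerate ``information vectors $u$ such that $(uG)_{I_1}$ has small weight.''  It simply enumerates, on each half $I_\ell$ and $I_r$, \emph{all} binary vectors of weight $\lfloor d/2\rfloor$ (resp.\ $\lceil d/2\rceil$)---there are $\binom{\lfloor n/2\rfloor}{\lfloor d/2\rfloor}\sim 2^{h(\delta_\ast)n/2}$ of them---and computes for each the full syndrome $H(e_\ell,0)^T$ or $H(0,e_r)^T$ in $\mathbb{F}_2^{\,r}$.  The matching key is this entire $r$-bit syndrome; there is no freedom in ``choosing which parity positions define the matching key,'' and Lemma~\ref{lm:weight} and Corollary~\ref{cor:fraction} play no role here.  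A pair $(e_\ell,e_r)$ collides iff $H_{I_\ell}e_\ell^{\,T}=H_{I_r}e_r^{\,T}$, i.e.\ iff $(e_\ell,e_r)$ is a codeword of weight exactly $\lfloor d/2\rfloor+\lceil d/2\rceil=d$.  So every collision \emph{is} a minimum-weight codeword; there are no spurious collisions to bound, and the table size is exactly the list size $L\sim 2^{h(\delta_\ast)n/2}$.  Sorting the combined list of size $2L$ gives the stated complexity.

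One minor point: to guarantee that some bipartition has the exact weights $\lfloor d/2\rfloor$ and $\lceil d/2\rceil$, the paper uses a cyclic sliding-window argument (as in Proposition~\ref{prop:SW-1}): as the left window $I_\ell=I(i,\lfloor n/2\rfloor)$ slides from $i$ to $i+1$, its weight changes by at most one, so it hits the average value $\lfloor d/2\rfloor$ for at least one of the $n$ starting positions.  This is cleaner than appealing to a nonconstructive pigeonhole over bipartitions and costs only a factor of $n$.
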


\begin{proof}
To find an (unknown) vector $e$ of weight $d={\delta_{\ast}n},$ we use the
\textquotedblleft left" window $I_{\ell}$ of length $s_{\ell}=\left\lfloor
n/2\right\rfloor $ starting in any position $i$ and the complementary
\textquotedblleft right" window $I_{r}$ of length $s_{r}=\left\lceil
n/2\right\rceil $ . At least one choice of $i$ gives the average weights
$v_{\ell}=\left\lfloor d/2\right\rfloor $ and $v_{r}=\left\lceil
d/2\right\rceil $ for truncated vectors $e_{\ell}$ and $e_{r}$ in windows
$I_{\ell}$ and $I_{r}$. The number $L$ of vectors $e_{\ell}$ and $e_{r}$ has
the order of
\[
\textstyle\frac{1}{n}\log_{2}L{\,{\sim}}\frac{1}{n}\log_{2}{\binom{s_{r}%
}{v_{r}}\,{\sim}}\,h({\delta_{\ast}})/2{\,}%
\]
We calculate the syndromes of all vectors $e_{\ell}$ and $e_{r}$ and try to
match two vectors with equal syndromes. This matching is performed by sorting
the elements of the combined set with complexity of order $Ln\log_{2}L$, which
gives exponent (\ref{mb-2}).\smallskip\ 
\end{proof}

Exponents (\ref{sw-ldpc1}) and (\ref{mb-2}) give the combined estimate
\begin{equation}
F=\min\{{(1-\theta}_{\ast}{)h(\delta_{\ast})},{h(\delta_{\ast})}/2\}
\label{comb}%
\end{equation}
Here parameters {$\delta_{\ast}$} and ${\theta}_{\ast}$ are defined for LDPC
codes in (\ref{dist1}).\smallskip\vspace{-0.05in}

\subsection{Covering set (CS) technique for LDPC\ codes\ \label{sec:atyp-LDPC}%
}

This probabilistic technique was proposed in \cite{Prange-1962} and has become
a benchmark in cryptography since the classical paper \cite{McEliece-1978}.
\ It lowers complexity estimate \ (\ref{comb}) for all but very high code
rates\ $R\rightarrow1.$ CS technique has also been studied for distance
verification of specific code families (see \cite{Lee-1988} and
\cite{Leon-1988}); however, provable results
\cite{Kruk-1989,Coffey-Goodman-1990} are only known for generic random codes. \ 

Below we choose any LDPC ensemble and describe CS technique in the following proposition.

\begin{proposition}
\label{prop:CS-1}Consider any code ensemble $\mathbb{C}$ with an average
relative distance {$\delta_{\ast}$} and an erasure-correcting bound
$\theta_{\ast}.$ For most codes $C\in\mathbb{C}$, CS technique performs
distance verification or corrects up to $\delta_{\ast}n$ errors with
complexity of order $2^{Fn}$ or less, where
\begin{equation}
F=h(\delta_{\ast})-{\theta}_{\ast}h(\delta_{\ast}/{\theta}_{\ast})
\label{cs-exp}%
\end{equation}

\end{proposition}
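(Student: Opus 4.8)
The plan is to present the CS technique as a search over \emph{covering sets}: a family of erasure sets $J$ of size $\theta_{\ast}n$ chosen so that the support of the target minimum-weight codeword lands inside at least one block. The organizing observation is that the exponent (\ref{cs-exp}) is exactly the reciprocal of the probability that a uniformly random set $J$ of size $\theta_{\ast}n$ contains a fixed support of size $d=\delta_{\ast}n$, since
\begin{equation}
\frac{\binom{n-d}{\theta_{\ast}n-d}}{\binom{n}{\theta_{\ast}n}}=\frac{\binom{\theta_{\ast}n}{d}}{\binom{n}{d}},\qquad \frac{1}{n}\log_{2}\frac{\binom{n}{d}}{\binom{\theta_{\ast}n}{d}}\,{\sim}\,h(\delta_{\ast})-\theta_{\ast}h(\delta_{\ast}/\theta_{\ast})=F.\label{plan-prob}
\end{equation}
Accordingly, I would draw random subsets $J$ with $|J|=\theta_{\ast}n$ (equivalently, precompute a covering design whose blocks of this size cover every $d$-subset of coordinates); for each $J$ I form the shortened code $C_{J}=\{c_{J}:c_{I}=0\}$, compute its co-rank $b_{J}=b(H_{J})$ via Lemma~\ref{lm:weight}, and enumerate its $2^{b_{J}}$ codewords, keeping the lightest nonzero one.

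A trial succeeds for a target codeword $c$ of weight $d$ precisely when $\mathrm{supp}(c)\subseteq J$: then $c_{I}=0$, so $c$ occurs as a nonzero codeword of $C_{J}$ and is seen during the enumeration. The same event makes the complementary information set $I=\overline{J}$ error-free, so the identical search corrects any error pattern of weight at most $\delta_{\ast}n$; this is why the proposition covers both verification and correction. Such a containment is feasible because $\delta_{\ast}<\theta_{\ast}$ for the LDPC ensembles of interest (e.g.\ $\delta_{\ast}\approx0.02<0.483\approx\theta_{\ast}$ for $\mathbb{A}(3,6)$), and choosing $|J|=\theta_{\ast}n$ is optimal: it is the largest block size for which the shortened code still carries only subexponentially many codewords.

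For the count, (\ref{plan-prob}) gives a per-trial success probability $2^{-Fn+o(n)}$, so $2^{Fn+o(n)}$ random trials capture $\mathrm{supp}(c)$ with high probability; a covering design of the same exponential size exists by the standard greedy/union-bound estimate. The cost of one trial is the polynomial cost of forming $C_{J}$ and a basis, times the enumeration cost $2^{b_{J}}$. Since $|J|=\theta_{\ast}n$ sits at the erasure-correcting threshold where $N_{\theta_{\ast}}\le1$, the permutation-invariance of the ensemble together with the Markov argument following Lemma~\ref{lm:erasure} yields $N_{J}(C)\le2^{\varepsilon n}$, hence $b_{J}\le\varepsilon n$, simultaneously on all blocks $J$ for all but a $2^{-\varepsilon n}$ fraction of codes $C\in\mathbb{C}$. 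Thus each trial costs $2^{\varepsilon n}\mathrm{poly}(n)$ and the total complexity is $2^{(F+\varepsilon)n+o(n)}$, giving exponent (\ref{cs-exp}) as $\varepsilon\to0$.

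The main obstacle is precisely this uniform control of the enumeration cost $2^{b_{J}}$ over the exponentially many blocks used by the algorithm: a bound on the \emph{average} $N_{\theta_{\ast}}$ does not by itself rule out a few atypically large shortened codes, and one such block could dominate the running time. Permutation-invariance is what converts the per-subset average into a guarantee holding uniformly over $J$ after discarding a vanishing fraction of codes. A secondary point to check is that operating at $|J|=\theta_{\ast}n$ (rather than strictly below threshold) is admissible; if not, taking $|J|=(\theta_{\ast}-\varepsilon)n$ is a safe fallback, since the containment exponent $h(\delta_{\ast})-(\theta_{\ast}-\varepsilon)h(\delta_{\ast}/(\theta_{\ast}-\varepsilon))$ still tends to $F$ as $\varepsilon\to0$.
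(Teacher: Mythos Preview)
Your proposal is correct and follows the same covering-set architecture as the paper: random (or covering-design) subsets $J$ of size $\approx\theta_{\ast}n$, enumerate the shortened code $C_{J}$, and the trial count is governed by the containment exponent (\ref{plan-prob}). The one substantive difference is how you control the per-trial enumeration cost $2^{b_{J}}$. You invoke permutation-invariance to force $b_{J}\le\varepsilon n$ \emph{uniformly} over all blocks $J$ after discarding a small fraction of codes. The paper instead avoids any uniform bound by averaging: since $N_{\theta}=\sum_{b}(2^{b}-1)\alpha_{\theta}(b)$ by definition, the ensemble average of the per-trial cost $n^{3}2^{b}$ is exactly $n^{3}(N_{\theta}+1)$, hence polynomial once $\theta=\theta_{\ast}-\varepsilon$; Markov is then applied once to the \emph{total} complexity summed over all $T$ trials. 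This buys generality---the paper's argument works for any ensemble with the stated threshold, whereas yours needs the additional permutation-invariance hypothesis, which the proposition does not assume (it holds for $\mathbb{A}(\ell,m)$ and $\mathbb{B}(\ell,m)$, but not for an arbitrary $\mathbb{C}$). Your ``safe fallback'' $|J|=(\theta_{\ast}-\varepsilon)n$ is in fact what the paper uses from the outset.
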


\begin{proof}
Let $e$ be some unknown codeword of weight $d$ in a given code $C\in
\mathbb{C}.$ Alternatively, we can consider an error vector $e$ of weight $d.$
To find $e,$ we repeatedly try to cover all $d$ nonzero positions of $e$ with
some subsets $J=\{i_{1},...,i_{s}\}$ of \ $s={\theta n}$ positions, where
${\theta=\theta_{\ast}-\varepsilon}$ and ${\varepsilon\rightarrow0}$ as
$n\rightarrow\infty.$ To cover every possible $d$-set, we need no less than
\[
T(n,s,d)={\binom{n}{d}}/{\binom{s}{d}}%
\]
sets $J.$ Below we randomly choose a larger number of
\begin{equation}
T=T(n,s,d)n\ln n \label{eq:112}%
\end{equation}
sets $J.$\ Following Theorem 13.4 of \cite{Erdos-book} it is easy to see that
$T$ trials fail to yield such an $(n,s,d)$-covering with a probability less
than $e^{-n\ln n}.$

Recall that $N_{\theta
}\rightarrow0$ for  the shortened codes $C_{J}$. \ Let $C_{J}(b)$ be a code
that contains $2^{b}-1$ nonzero codewords for some $b=0,...,{\theta n.}$ Also,
let $\alpha_{\theta}(b)$ be the fraction of codes $C_{J}(b)$ in the ensemble
$\mathbb{C}_{J}.$ Then
\begin{equation}
N_{\theta}=\sum_{b=0}^{{\theta n}}\left(  2^{b}-1\right)  \alpha_{\theta}(b)
\label{ave1}%
\end{equation}
A parity-check matrix $H_{J}$ of any code $C_{J}(b)$ \ has rank $s-b$ and size
$r\times s,$ where $r=n-k$ is the number of parity checks$.$ By Gaussian
elimination, matrix $H_{J}$ can be modified into a new $r\times s$ matrix
$\mathcal{H}_{J}$ that includes $b$ \ zero rows$.$ We will also place $s-b$ unit columns 
$u_{i}=(0...01_{i}0...0)$  in the first
positions $i\in\lbrack1,s-b]$ of $\mathcal{H}_{J},$
and $b$ other columns $g_{j}$ in the last positions
$j\in\lbrack s-b+1,s].$ Let $v=\mathcal{H}e^{T}$ denote the syndrome of vector
$e$ (possibly modified by the Gaussian elimination procedure).

$A.$ First, consider  general  error correction given a
syndrome $v\neq 0.$ If $b=0$ in a given trial $J,$ then matrix $\mathcal{H}_{J}$ has
full rank and we obtain vector $e$ of weight $wt(v).$ If $b>0,$ we assume that
$v$ contains only zero symbols in the last $b$ positions. Then CS algorithm
inspects all $2^{b}$ linear combinations (LC) of the last columns $g_{j}.$ Let
$LC(p)$ denote some LC that includes $p$ columns. If $LC(p)+v$ \ has weight
$w,$ we obtain vector $e$ of weight $w+p$ by adding $w$ unit columns $u_{i}.$

The overall decoding algorithm successively tries to find a vector $e$ of
weight $d=1,2....$ For any given $d,$ it runs over all subsets $J$ and ends
once we find a vector $e$ of weight $w+p=d.$ For any given code $C_{J}(b),$
this procedure includes one Gaussian elimination and up to $b$ vector
additions, which gives complexity $\mathcal{D}_{\theta}(b)\leq n^{3}%
+rb2^{b}\leq n^{3}2^{b}.$ \ \ For a given set $J,$ different codes $C_{J}(b)$
yield the average complexity%
\begin{equation}
\mathcal{D}_{\theta}(J)\leq\sum_{b=0}^{{\theta n}}n^{3}2^{b}\alpha_{\theta
}(b)=n^{3}N_{\theta}+n^{3} \label{ave2}%
\end{equation}
Thus, CS algorithm has the total average complexity $\mathcal{D}_{ave}\sim
n^{3}T$ for all $T$ sets $J.$ Then at most a vanishing fraction $1/n$ of codes
$C$ have complexity $\mathcal{D}\geq n^{4}T,$ which gives the exponent
 (\ref{cs-exp}) for the remaining codes in ensemble $\mathbb{C}$ as $n\to\infty$.

$B.$ Vector $e$ forms a codeword with syndrome $v=0.$ Then any code $C_{J}(0)$
has no nonzero codewords, and CS algorithm skips the above case $b=0$. Also,
we consider only $2^{b}-1$ nonzero combinations LC$(p)$ for the last $b$
columns in any $C_{J}(b)$. \ Thus, we replace (\ref{ave2}) with a similar
inequality%
\begin{equation}
\mathcal{D}_{\theta}(J)\leq\sum_{b=1}^{{\theta n}}n^{3}\left(  2^{b}-1\right)
\alpha_{\theta}(b)\leq n^{3}N_{\theta}+n^{3} \label{ave3}%
\end{equation}
that satisfies complexity bound (\ref{cs-exp}).\smallskip\smallskip
\end{proof}

\textit{Remark. \ }The existing CS algorithms  employ
some stringent properties of \ random ensembles of linear codes. For example,
the algorithm of \cite{Kruk-1989} uses the fact that most random binary
$r\times n$ matrices $H$,  except an exponentially small
fraction ${\binom{n}{r}}^{1-c}$ for $c>1,$  have \ all $r\times r$ submatrices $H_{J}$ \ with
nearly-full rank $r-b,$ where
\begin{equation}
0\leq b\leq b_{\max}=\textstyle\sqrt{c\log_{2}{\binom{n}{r}}} \label{11}%
\end{equation}
 Thus,  all shortened codes $C_{J}$ have limited  size $2^{b}$ for most linear codes $C.$
For LDPC codes, we use a slightly weaker condition. Our technique  discards  codes $C_{J}$ of
large size $2^{b}$ that form an exponentially small fraction of all codes
$C_{J}.$

Fig. \ref{fig:ldpc} summarizes complexity estimates for LDPC codes. For
comparison, we also plot two generic exponents valid for most linear codes.
Note that these codes meet the\ GV bound and have parameters $h(\delta_{\ast
})={\theta}_{\ast}=1-R.$ \ Then the combination (\ref{comb}) of SW and MB
algorithms\ gives exponent $F=\min\{R(1-R),(1-R)/2\}$, whereas exponent
(\ref{cs-exp}) of \ CS algorithm reads as $F=(1-R)\bigl[1-h\left(
\delta/(1-R)\right)  \bigr].$ For LDPC codes, we similarly consider the
exponents (\ref{comb}) and (\ref{cs-exp}). Here we consider ensembles
$\mathbb{A}(\ell,m)$ or $\mathbb{B}(\ell,m)$ for various LDPC $(\ell,m)$ codes
with code rates ranging from $0.125$ to $0.8.$ With the exception of low-rate
codes, all LDPC codes of Fig. \ref{fig:ldpc} achieve a substantial reduction
in complexity exponent for distance verification compared to the generic
linear codes.

\begin{figure}[ptbh]
\centering
\includegraphics[width=1.\columnwidth]{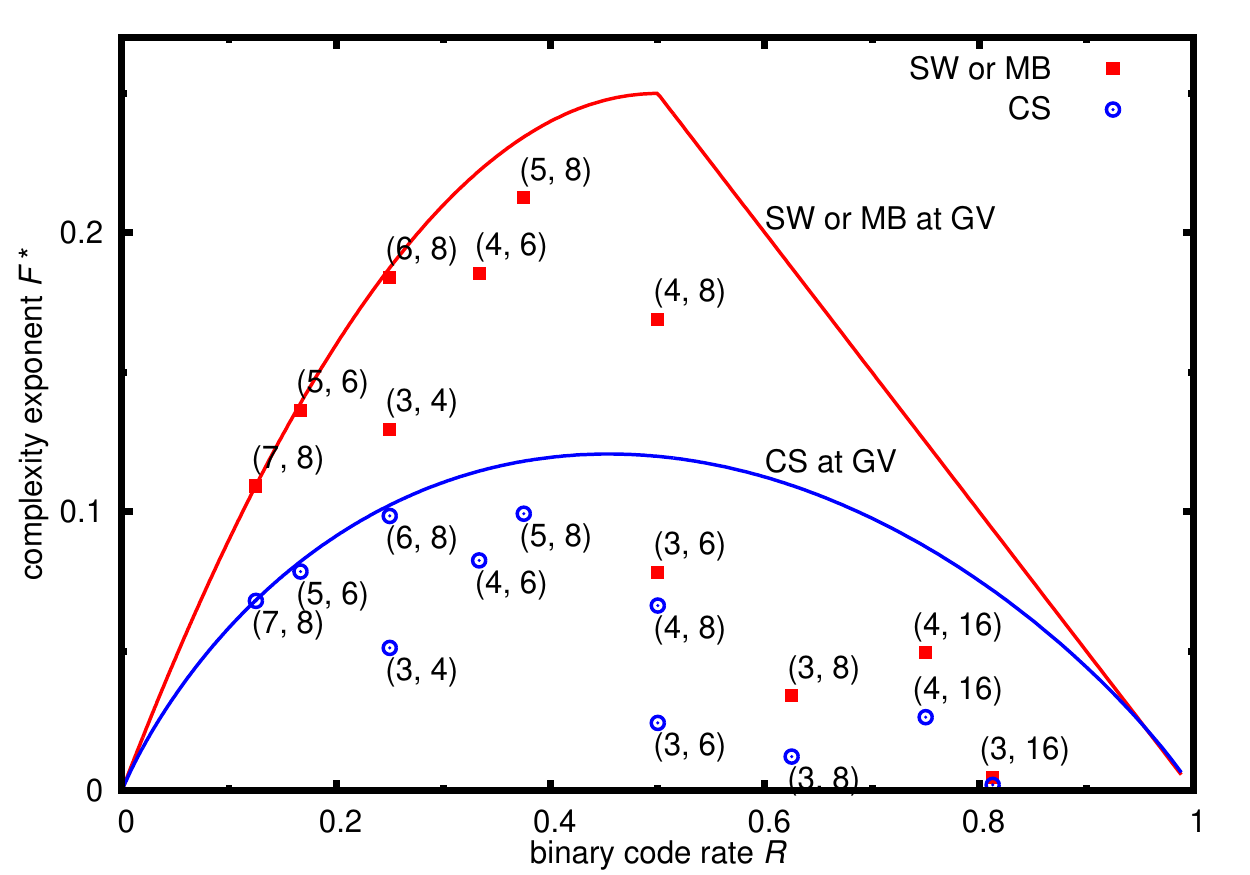} \caption{Complexity exponents
for the binary codes meeting the GV bound and for some ($\ell,m)$-regular LDPC
codes as indicated. Abbreviation \textquotedblleft SW or MB\textquotedblright%
\ stands for the Sliding Window or Matching  Bipartition techniques (marked with filled boxes), and "CS"
stands for the Covering-Set technique (marked with  empty circles).}

\label{fig:ldpc}%
\end{figure}

\section{Further extensions}

In this paper, we study provable algorithms of distance verification for LDPC
codes and derive complexity estimates using only the relative distance
$\delta_{\ast}$ and the erasure-correcting threshold ${\theta}_{\ast}$
averaged over a given ensemble of codes. For LDPC codes,  these algorithms exponentially reduce
generic complexity estimates known for 
random linear codes. More generally, this approach can be used for any
ensemble of codes with a given erasure-correcting threshold. \ 

One particular extension is any \ ensemble of irregular LDPC codes with the
known parameters $\delta_{\ast}$ and ${\theta}_{\ast}.$ Note that parameter
${\theta}_{\ast}$ has been studied for both ML decoding and message-passing
decoding of irregular codes
\cite{Luby-2001,Rich-Urb-2001,Pishro-2004}. For ML decoding, this
parameter can also be derived using the weight spectra obtained for irregular
codes in papers \cite{Di-2001,Litsyn-2003}.

Another direction is to design more advanced algorithms of distance
verification  for LDPC codes. Most of such algorithms
known to date for linear $[n,k]$ codes combine Matching Bipartition (MB)
techniques with the Covering Set (CS) algorithms.  In particular, the algorithm
of \cite{Stern-1989} first applies CS technique seeking some slightly
corrupted information set of $k$ bits. It also tries to select some small subset of
$\Delta$ parity bits,  every time  assuming that these bits are error-free. 
Then  MB technique is applied  to correct  information bits in the $[k+\Delta,k]$-code with  $\Delta$ correct parity bits. 
\ This algorithm reduces the maximum
complexity exponent $\max_{R}F(R)$ of CS technique from 0.1208 to 0.1167. A
slightly more efficient algorithm of \cite{Dumer-1991} (see also
\cite{Barg-1998}) reduces this exponent to 0.1163 using a lightly corrupted
block of length greater than $k$. Later, this algorithm has been
re-established for cryptographic setting in \cite{Sendrier-2009,
Bernstein-2011} with many applications related to the McEliece cryptosystem.
More recently, the maximum complexity exponent $F(R)$ has been further reduced to 0.1019
using some robust MB techniques that allow randomly overlapping partitions
\cite{Becker-2012}. An important observation is that both MB and CS techniques
can be applied to LDPC codes; therefore our conjecture is that provable
complexity bounds for distance verification also  carry over to the above
techniques. These more advanced algorithms can again slightly reduce the
exponent of CS complexity for LDPC codes; however, their detailed description
is beyond the scope of this paper.

Finally, one more approach is to combine LDPC-specific message-passing
algorithms with the subsequent erasure correction. Such an approach has been
used in \cite{Dumer-Kovalev-Pryadko-2014} for quantum LDPC codes that require
stringent self-orthogonality conditions. \ The corresponding complexity
exponent closely approaches exponent $F(R)$ for self-orthogonal LDPC codes
that have high code rate and low distance. For all other instances, this
approach requires substantial improvements as complexity exponents exceed the
exponent $F(R)$ obtained in the current paper.\smallskip

{\it Acknowledgment. } The work of L.P. Pryadko was supported in part by ARO Grant W911NF-14-1-0272
and NSF Grant PHY-1416578. The work of A.A. Kovalev was supported in part by
NSF Grant PHY-1415600. \smallskip





\bibliographystyle{IEEEtran}
\bibliography{IEEEabrv,ldpc1,lpp,qc_all,more_qc,MyBIB}
$\bigskip\bigskip$

\end{document}